\newtheorem{definition}{Definition}
\newtheorem{lemma}[definition]{Lemma}
\newtheorem{corollary}[definition]{Corollary}
\newtheorem{theorem}[definition]{Theorem}
\newcommand{\Gon}{\ensuremath{G_{\text{on}}}}
\newcommand{\Von}{\ensuremath{V_{\text{on}}}}
\newcommand{\Voff}{\ensuremath{V_{\text{off}}}}
\newcommand{\Eon}{\ensuremath{E_{\text{on}}}}
\newcommand{\Eoff}{\ensuremath{E_{\text{off}}}}
\newcommand{\non}{\ensuremath{n_{\text{on}}}}
\newcommand{\noff}{\ensuremath{n_{\text{off}}}}
\newcommand{\mon}{\ensuremath{m_{\text{on}}}}
\newcommand{\moff}{\ensuremath{m_{\text{off}}}}
\newcommand{\supergraph}{increment graph}
\title{Incremental and Fully Dynamic Subgraph Connectivity For Emergency Planning}
\author{Monika Henzinger \\
University of Vienna, Faculty of Computer Science, Vienna, Austria\\
  \texttt{monika.henzinger@univie.ac.at}
\and
  Stefan Neumann \\
  University of Vienna, Faculty of Computer Science, Vienna, Austria\\
  \texttt{stefan.neumann@univie.ac.at}}
\begin{document}
\maketitle

\begin{abstract}
During the last 10 years it has become popular to study dynamic graph problems
in a \emph{emergency planning} or \emph{sensitivity} setting:
Instead of considering the general fully dynamic problem,
we only have to process a \emph{single} batch update of size $d$;
after the update we have to answer queries.

In this paper, we consider the dynamic subgraph connectivity problem with sensitivity $d$:
We are given a graph of which some vertices are activated and some are deactivated.
After that we get a single update in which the states of up to $d$ vertices are changed.
Then we get a sequence of connectivity queries in the subgraph of activated vertices.

We present the first fully dynamic algorithm for this problem
which has an update and query time
only slightly worse than the best decremental algorithm.
In addition, we present the first incremental algorithm which is tight with respect
to the best known conditional lower bound; moreover, the algorithm is simple and we believe
it is implementable and efficient in practice.
\end{abstract}

\section{Introduction}
Dynamic graph algorithms maintain a data structure to answer queries
about certain properties of the graph
while the underlying graph is changed, e.g., by vertex or edge deletions and additions;
such properties could be, for example, the connectivity or the shortest
paths between two vertices.
The main goal is that after an update the algorithm does not have to recompute
the data structure from scratch, but only has to make a
small number of changes to it.
Due to strong conditional lower bounds for various dynamic graph
problems (see~\cite{abboud2014popular,henzinger2015unifying,kopelowitz2016higher}),
it is necessary to restrict the dynamic model in some way to improve the efficiency
of the operations. One model that has become increasingly popular is
to study dynamic graph problems in a \emph{sensitivity} or
\emph{emergency planning} setting (see, e.g.,
\cite{duan2009dual,patrascu2007planning,chechik2011sensitivity,demetrescu2008oracles,
bernstein2009nearly,bernstein2008improved,khanna2010approximate}):
Instead of considering the general fully dynamic problem in which we get
a sequence of updates and queries, we only allow for a single batch
update of size $d$ after which we want to answer queries.
Since we allow only a single update, the update and query times for such
sensitivity problems are much faster than for the general fully
dynamic problem.

In this paper, we consider the subgraph connectivity problem with sensitivity $d$:
We get a graph $G = (V,E)$ of which some vertices are \emph{activated} and
some are \emph{deactivated} and we can preprocess it.
There is a single update changing the states of up to $d$ vertices.
In the subsequent queries we need to answer if two given
vertices are connected by a path which traverses only activated vertices.
If the update can only active previously deactivated vertices, then
an algorithm for this problem is called \emph{incremental};
if it can only deactivate activated vertices, then it is \emph{decremental};
if it can turn vertices on and off arbitrarily, then it is called \emph{fully dynamic}.

The problem is of high practical interest as it models a scenario which is
very relevant to infrastructure problems.
For example, assume you are an internet service provider and you maintain
many hubs which are connected to each other.
In case of a defect, some of the hubs fail but there is a
small number of \emph{backup} hubs which can be used until
the defect hubs are repaired in order to provide your services to your
customers. Notice that in such a scenario it is likely that the number
of backup hubs is much smaller than the number of regular hubs.

\subsection{Our Contributions}
\label{subsec:contributions}
We present the first incremental and fully dynamic
algorithms for the subgraph connectivitiy problem with
sensitivity $d$. 
The update and query times of our fully dynamic algorithm are only
slightly slower than those of the best decremental algorithm for this
problem.
In addition, the incremental algorithm is essentially tight with
respect to the best known conditional lower bound for this problem.
Additionally, we contribute a characterization of the paths which
are added to a graph when activating some nodes.

Our result for the fully dynamic problem with sensitivity $d$ is given in the following
theorem. We state the running time with respect to a blackbox algorithm
for the decremental version of the problem as subprocedure.
The number of initially deactivated vertices is denoted by $\noff$.

\begin{theorem}
\label{Thm:FullyDynamicAlgo}
  Assume there exists an algorithm for the decremental subgraph connectivity
  problem with sensitivity $d$ that has preprocessing time $t_p$, update
  time $t_u$, query time $t_q$ and uses space $S$.
  Then there exists an algorithm for the fully dynamic subgraph connectivity
  problem with sensitivity $d$ that uses space
  $O(\noff^2 \cdot S)$ and has preprocessing time $O(\noff^2 \cdot t_p)$.
  It can process an update of $d$ vertices in time $O(d^2 \cdot \max\{t_u,t_q\})$ and queries
  in time $O(d \cdot t_q)$.
\end{theorem}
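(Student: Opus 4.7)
The plan is to reduce to the decremental problem by precomputing, for every set of at most two initially deactivated vertices, a decremental data structure on the graph in which that set is treated as already activated. This way the ``activation'' part of a future update is effectively handled in preprocessing, and at update time we only have to feed the ``deactivation'' part to a quadratic-in-$d$ number of the precomputed instances.

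During preprocessing I would build (i) one decremental structure $\mathcal{D}_\emptyset$ on the graph induced by $\Von$, (ii) for each $u \in \Voff$ a structure $\mathcal{D}_u$ on the graph induced by $\Von \cup \{u\}$ with $u$ initially active, and (iii) for each pair $\{u,v\} \subseteq \Voff$ a structure $\mathcal{D}_{u,v}$ on the graph induced by $\Von \cup \{u,v\}$ with $u,v$ initially active. This uses $O(\noff^2)$ decremental instances, so preprocessing costs $O(\noff^2 \cdot t_p)$ and space is $O(\noff^2 \cdot S)$. Given an update with activations $A$ and deactivations $D$, I would feed the batch $D$ to $\mathcal{D}_\emptyset$, to $\mathcal{D}_a$ for each $a \in A$, and to $\mathcal{D}_{a,b}$ for each pair $\{a,b\} \subseteq A$, touching $O(d^2)$ structures at cost $t_u$ each. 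I would then build a union-find on $A$ by querying, for every pair, whether $a \sim b$ in $\mathcal{D}_{a,b}$ and unioning accordingly, at cost $O(d^2 \cdot t_q)$. Together this meets the $O(d^2 \cdot \max\{t_u,t_q\})$ update budget.

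To answer a connectivity query $(x,y)$ between active vertices I would rely on the following decomposition: any valid $x$--$y$ path in the updated subgraph can be cut at its visits to $A$ into segments whose interiors lie entirely in $\Von \setminus D$, so each segment is witnessed by one of the precomputed structures. For $x,y \in \Von \setminus D$ the algorithm first checks whether $x \sim y$ already holds in $\mathcal{D}_\emptyset$, and otherwise computes $N_x := \{a \in A : x \sim a \text{ in } \mathcal{D}_a\}$ and $N_y$ analogously in $O(d \cdot t_q)$ queries, answering ``yes'' iff some element of $N_x$ and some element of $N_y$ share a union-find class (checkable in $O(d)$ via a hash set keyed by class). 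The cases where one or both of $x,y$ lies in $A$ reduce similarly, by comparing the known union-find class of the $A$-endpoint against either $N$-set or against the class of the other endpoint. The step I expect to be the main obstacle is proving the soundness of this decomposition: I must show that it suffices to have precomputed decremental structures only for singletons and pairs from $\Voff$, i.e., that restricting each segment to pass through only its two activated endpoints (rather than all of $A$) already captures the full connectivity induced by activating $A$ simultaneously. This is precisely the path characterization (the ``\supergraph'') foreshadowed in the paper's contributions list.
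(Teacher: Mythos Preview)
Your proposal is correct and essentially identical to the paper's own construction: the structures $\mathcal{D}_\emptyset$, $\mathcal{D}_u$, $\mathcal{D}_{u,v}$ are exactly the paper's $\Gon$, $G_u$, $G_{u,v}$, your union-find on $A$ is the paper's ``\supergraph'' (computed via connected components rather than union-find), and your query procedure and its correctness argument match the paper's use of the path characterization in Lemma~\ref{lem:paths}. The only cosmetic difference is that you phrase the query via the sets $N_x,N_y$ and a class comparison, whereas the paper iterates over the components of the \supergraph{} directly; the logic and running-time analysis are the same.
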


For the decremental version of the subgraph connectivity problem with sensitivity $d$
(which is also referred to as \emph{d-failure connectivity}),
the best known algorithm is by Duan and Pettie~\cite{duan2010connectivity}.
Their result is given in the following lemma.

\begin{lemma}[\cite{duan2010connectivity}]
\label{Lem:DecAlgo}
	Let $G = (V,E)$ be a graph and let $n = |V|$, $m = |E|$,
	let $c \in \mathbb{N}$. Then there exists a data structure
	for the decremental subgraph connectivity problem with sensitivity~$d$ that
	has size
	$S = O(d^{1 - 2/c} m n^{1/c - 1/(c \log(2d))} \log^2 n)$
	and preprocessing time $\tilde O(S)$.
	An update deactivating $d$ vertices 
	takes time $O(d^{2c+4} \log^2 n \log \log n)$ and subsequent connectivity
	queries in the graph after the vertex deactivations take $O(d)$ time.
\end{lemma}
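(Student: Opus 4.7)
The plan is to design a hierarchical sampling data structure with roughly $c$ levels. At level $i$, sample a random pivot set $V_i\subseteq V$ whose density decreases geometrically from $V_c=V$ down to $|V_0|=\Theta(d)$; choosing $|V_i|\approx n(d/n)^{i/c}$ ensures, via standard Chernoff bounds, that every sufficiently large connected piece of $G$ after removing up to $d$ vertices hits $V_i$ with high probability. I will then precompute, for each level, a collection of short pivot-to-pivot vertex-disjoint paths that certify connectivity within that level, stored in Euler-tour trees so that the presence of a failed vertex on any path can be tested in polylogarithmic time.

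First I would bound the preprocessing and space. The dominant cost at level $i$ is storing $\Theta(|V_i|^2)$ paths of length $O(n^{1/c})$ each, and summing geometrically over $i=0,\dots,c$ gives a bound of the form $O(d^{1-2/c}mn^{1/c}\log^2 n)$. The additional $n^{-1/(c\log(2d))}$ factor comes from truncating the recursion once the pivot set shrinks below $\Theta(d)$: at that bottom level a brute-force structure of size $\tilde O(md)$ is already affordable, which saves the claimed small power of $n$ and yields $\tilde O(S)$ preprocessing time by computing each level with BFS and standard dynamic-trees initialization.

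Next I would handle an update that deactivates a failure set $D$ with $|D|=d$. At every level, I mark each precomputed path that contains some vertex of $D$, attempt to reroute it through the precomputed alternates at the level below, and update the Euler-tour representation accordingly. Charging $O(d)$ touched paths per failure per level and $\mathrm{poly}(d,\log n)$ maintenance per path gives the stated $O(d^{2c+4}\log^2 n\log\log n)$ bound. For a query $(u,v)$, I would project each endpoint to its nearest surviving pivot at every level, assemble a meta-graph on $O(d)$ surviving pivots whose edges are exactly the intact precomputed paths, and answer connectivity there with union-find, which yields the $O(d)$ query bound.

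The hard part is the correctness argument: one must show that whenever $u$ and $v$ are connected in the activated subgraph $G-D$, their projections into the $V_i$'s are connected in the meta-graph via intact precomputed paths, so that the $O(d)$-size meta-structure faithfully represents connectivity under any $d$ failures. This demands a union bound over all $\binom{n}{d}$ possible failure sets, forcing the path bundles between consecutive pivots to have width $\Omega(d\log n)$ so that $d$ failures cannot sever every bundle, and it is this width requirement that couples the parameter $c$ (number of levels) to the exponents in both the space bound and the $d^{2c+4}$ factor in the update time.
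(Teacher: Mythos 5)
This lemma is not proved in the paper at all: it is quoted verbatim from Duan and Pettie~\cite{duan2010connectivity} and used as a black box in Corollary~\ref{Cor:FullyDynamicAlgo}, so there is no in-paper proof to compare your sketch against. Judged on its own as a reconstruction of the cited result, however, your proposal has a fundamental flaw in the correctness argument. You propose to certify connectivity between surviving pivots by precomputed bundles of $\Omega(d\log n)$ vertex-disjoint paths, wide enough that a union bound over all $\binom{n}{d}$ failure sets guarantees some path in each needed bundle survives. But connectivity does not imply high vertex connectivity: two vertices that remain connected after the failures may be joined, both before and after, only through a single cut vertex, in which case no bundle of more than one vertex-disjoint path exists and your survival guarantee is vacuous. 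The oracle must answer correctly exactly in those fragile cases where $u$ and $v$ stay connected despite $d$ failures coming arbitrarily close to separating them, and no amount of precomputed disjoint redundancy can witness this. The actual construction of~\cite{duan2010connectivity} is built on a deterministic hierarchy of low-degree (F\"urer--Raghavachari) spanning trees: a failed vertex of degree $\Delta$ in such a tree splits its Euler tour into $O(\Delta)$ intervals, and connectivity among the surviving pieces is recovered by range-searching structures over pairs of Euler-tour intervals; the parameter $c$ is the depth of that degree hierarchy, which is where both the $n^{1/c-1/(c\log(2d))}$ space factor and the $d^{2c+4}$ update factor come from.

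Your resource accounting also does not deliver the claimed bounds. Storing $\Theta(|V_i|^2)$ paths of length $O(n^{1/c})$ at the top level, where $|V_c|=n$, costs $\Theta(n^{2+1/c})$, which exceeds the claimed $O(d^{1-2/c}\,m\,n^{1/c}\log^2 n)$ whenever the graph is sparse, and the attribution of the $n^{-1/(c\log(2d))}$ factor to ``truncating the recursion at pivot sets of size $\Theta(d)$'' is asserted rather than derived. If you want to prove this lemma you should follow the Euler-tour/degree-hierarchy route rather than sampled pivots with disjoint path bundles.
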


As pointed out in~\cite{duan2010connectivity}, for moderate values of $d$ the space $S$ used
by the data structure from Lemma~\ref{Lem:DecAlgo} is $o(m n^{1/c})$;
further, if $m < n^2$, then we always have $S = o(m n^{2/c})$.
Using the algorithm of Lemma~\ref{Lem:DecAlgo} as a subprocedure
for our result from Theorem~\ref{Thm:FullyDynamicAlgo}, we obtain the following
corollary. The number of initially activated vertices is given by $\non$
and the number of initially activated edges in the graph is denoted by $\mon$.

\begin{corollary}
\label{Cor:FullyDynamicAlgo}
  There exists an algorithm for the fully dynamic subgraph connectivity
  problem with sensitivity $d$ with the following properties.
  For any $c \in \mathbb{N}$, it uses space
  $S' = O(\noff^2 \cdot S)$ and preprocessing time $\tilde O(S')$,
  where $S = O(d^{1-2/c} \mon \non^{1/c - 1/(c \log(2d))} \log^2 \non)$.
  It can process an update of $d$
  vertices in time $O(d^{2c+6} \log^2 \non \log\log \non)$ and answer queries
  in the updated graph in time $O(d^2)$.
\end{corollary}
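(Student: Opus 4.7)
The proof should be essentially a direct substitution: I plan to instantiate Theorem~\ref{Thm:FullyDynamicAlgo} with the decremental data structure of Lemma~\ref{Lem:DecAlgo} as the blackbox subprocedure, and then verify that each of the four resulting bounds matches what the corollary claims.

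First, I would fix the parameters supplied by Lemma~\ref{Lem:DecAlgo}. Since the fully dynamic reduction of Theorem~\ref{Thm:FullyDynamicAlgo} maintains decremental data structures that live on the activated part of the input graph, the variables $n$ and $m$ in Lemma~\ref{Lem:DecAlgo} are to be read as $\non$ and $\mon$. This immediately gives $S = O(d^{1-2/c}\mon \non^{1/c - 1/(c\log(2d))} \log^2 \non)$, $t_p = \tilde O(S)$, $t_u = O(d^{2c+4}\log^2 \non \log\log \non)$, and $t_q = O(d)$. Making this identification explicit is the only conceptual point that needs to be stated; the rest is arithmetic.

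Next I would plug these four quantities into the guarantees of Theorem~\ref{Thm:FullyDynamicAlgo}. The space bound $O(\noff^2 \cdot S)$ is literally $S'$ by definition, and combining $t_p = \tilde O(S)$ with the preprocessing bound $O(\noff^2 \cdot t_p)$ yields $\tilde O(S')$. Since $t_q = O(d) \le t_u$ in the relevant regime, we have $\max\{t_u, t_q\} = t_u$, so the update time is $O(d^2 \cdot t_u) = O(d^{2c+6}\log^2 \non \log\log \non)$. Finally, the query time is $O(d \cdot t_q) = O(d^2)$.

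There is no real obstacle here: the corollary is a pure composition of the two preceding statements, and I expect the writeup to be only a few lines. The only care needed is to make clear the substitution $n = \non$, $m = \mon$ when invoking Lemma~\ref{Lem:DecAlgo} and to note that $t_u$ dominates $t_q$ so that $\max\{t_u,t_q\}$ collapses to $t_u$ in the update-time expression.
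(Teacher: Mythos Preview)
Your proposal is correct and matches exactly what the paper does: the corollary is stated immediately after Lemma~\ref{Lem:DecAlgo} as the direct instantiation of Theorem~\ref{Thm:FullyDynamicAlgo} with that lemma's decremental data structure, and the paper does not even spell out the arithmetic. Your identification $n=\non$, $m=\mon$ (justified in Section~\ref{Sec:FullyDynamicAlgo} by the observation that each augmented graph $G_{u,v}$ has $O(\non)$ vertices and $O(\mon)$ edges) and the observation that $t_u$ dominates $t_q$ are the only points worth making, and you have them both.
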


In the case that we get an update of size $d' < d$, we can make the update and query times
of the data structure depend only on $d'$: We build the data structure for all values
$d' = 2^1, \dots, 2^\ell$, where $\ell$ is the smallest integer such that
$d \leq 2^\ell$. Asymptotically this will not use more space than building
the data structure once for $d$;
for an update of size $d'$ we use
the instance of the data structure for the smallest $2^i \geq d'$.

In the incremental algorithm we only allow for initially
deactivated vertices to be activated.
Our result for the incremental problem is given in the following theorem.

\begin{theorem}
\label{Thm:IncAlgo}
  There exists an algorithm for the incremental subgraph connectivity
  problem with sensitivity $d$ which has preprocessing time $O(\noff^2 \cdot \non + m)$,
  update time $O(d^2)$ and query time $O(d)$. It uses space $O(\noff \cdot n)$.
\end{theorem}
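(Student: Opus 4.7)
The plan exploits the fact that in the incremental setting the connected components of $\Gon$ are fixed during preprocessing and can only be merged (never split) by the single update. First, compute the connected components $C_1,\dots,C_k$ of $\Gon$ by BFS in $O(\non+\mon)$ time. Then, traversing the remaining edges in $O(m)$ time, record for every $v\in\Voff$ the set $f(v)\subseteq\{C_1,\dots,C_k\}$ of $\Gon$-components containing at least one neighbour of $v$, together with the inverse sets $f^{-1}(C_i)\subseteq\Voff$ stored as hash tables so that membership can be tested in $O(1)$.

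The structural heart of the argument is a characterization of the \supergraph{}. Define an auxiliary graph $H^{*}$ on $\Voff$ with an edge $\{u,v\}$ whenever $(u,v)\in E$ or $f(u)\cap f(v)\neq\emptyset$. I would prove that after activating $D\subseteq\Voff$, two endpoints $a,b$ of the updated activated subgraph are connected iff either (i) $a,b\in\Von$ lie in the same $\Gon$-component, or (ii) there exist $v_a,v_b\in D$ that lie in the same connected component of the induced subgraph $H^{*}[D]$ and such that each $v_x$ either equals $x$ (when $x\in D$) or lies in $f^{-1}(C_x)$ (when $x\in\Von$ with $\Gon$-component $C_x$). For the forward direction take any $a$--$b$ path in the updated graph and restrict it to its subsequence in $D$: two consecutive such vertices are either $E$-adjacent or joined by a subpath entirely in $\Von$ through a shared $\Gon$-component, which are exactly the two edge types of $H^{*}$. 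The reverse direction stitches the witnesses back into a path.

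Given this characterization, preprocessing additionally fills an $\noff\times\noff$ boolean matrix $\mathrm{adj}[u][v]$ recording whether $u$ and $v$ are $H^{*}$-adjacent. Representing each $f(v)$ as a sorted list (or a bitmap over the $k\le\non$ components) makes every pairwise intersection test take $O(\non)$ time, yielding total preprocessing $O(\noff^{2}\cdot\non+m)$ and auxiliary space $O(\noff\cdot\non+\noff^{2})=O(\noff\cdot n)$. For an update with $|D|=d$, reading the $d\times d$ principal submatrix of $\mathrm{adj}$ materializes $H^{*}[D]$, and a BFS computes its connected components in $O(d^{2})$ time, labelling every $v\in D$ with $\ell(v)$. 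A query on $(a,b)$ reduces each endpoint to either itself (if already in $D$) or its $\Gon$-component; after short-circuiting on $C_a=C_b$ and on the cases where $a$ or $b$ is in $D$, scan the $d$ elements of $D$ to locate any $v_a\in D\cap f^{-1}(C_a)$ via an $O(1)$ hash lookup, do likewise for $v_b$, and return $\ell(v_a)=\ell(v_b)$; the total query time is $O(d)$.

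The main obstacle is that a single $v\in\Voff$ may be adjacent to arbitrarily many $\Gon$-components, so the query routine cannot afford to enumerate $f(v)$ and the update routine cannot afford to union each $v\in D$ with every $C_i\in f(v)$ in a union-find structure. The saving observation, which the proof of the characterization must make explicit, is that any two candidates $v_a,v_a'\in D\cap f^{-1}(C_a)$ share the witness $C_a\in f(v_a)\cap f(v_a')$, hence are $H^{*}$-adjacent and lie in the same component of $H^{*}[D]$; picking any single $v_a$ therefore suffices, which is what keeps the $O(d)$ scan both correct and within budget. With that point settled, the remainder of the correctness argument is routine case analysis on where $a$ and $b$ lie.
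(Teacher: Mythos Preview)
Your proposal is correct and follows essentially the same approach as the paper. Your graph $H^{*}[D]$ is exactly the paper's \supergraph, your matrix $\mathrm{adj}$ coincides with the last $\noff$ entries of the paper's arrays $A_u$, your hash tables $f^{-1}(C_i)$ play the role of the arrays $A_{C_i}$, and your characterization is the paper's Lemma~\ref{lem:paths}; the one cosmetic difference is that your query picks a single witness $v_a$ and argues (via the shared component $C_a$) that the choice is immaterial, whereas the paper phrases the same $O(d)$ scan as iterating over the components of the \supergraph{} and testing each against $C_u$ and $C_v$.
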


The algorithm is simple and we believe it is implementable
and efficient in practice.

For our incremental data structure the sensitivity
parameter~$d$ does not have to be fixed beforehand, i.e., once initialized, the
data structure can process updates of arbitrary sizes and the update and query times
will only depend on the size of the given update.

We observe that the conditional lower bound given in
Henzinger et al.~\cite{henzinger2015unifying} for the decremental version of the problem
can easily be altered to work for the incremental problem as well.
The conditional lower bound states that
under the Online Matrix vector~(OMv) conjecture
any algorithm solving the incremental subgraph connectivity problem
with sensitivity $d$ which uses preprocessing time polynomial in $n$ and
and update time polynomial in $d$ must have a query time of $\Omega(d^{1-\varepsilon})$
for all $\varepsilon > 0$. Examining the proof of the lower bound, we observed that
the maximum of the query and update time even has to be in $\Omega(d^{2-\varepsilon})$ for all
$\varepsilon > 0$.
Hence, the update and query times of our incremental algorithm are essentially optimal
under the OMv conjecture.

\subsection{Related Work}
In recent years there have been several results studying data structures
for problems in an emergency planning or sensitivity setting when only a single
update of small size is allowed.
The field was introduced by Patrascu and Thorup~\cite{patrascu2007planning}
who considered connectivity queries after $d$ edge failures.
Demetrescu et al.~\cite{demetrescu2008oracles} studied distance oracles avoiding
a single failed node or edge.
This setting was also considered by Bernstein and Karger~\cite{bernstein2008improved,bernstein2009nearly}.
Later, Duan and Pettie~\cite{duan2009dual} studied distance and connectivity oracles in case
of two vertex failures.
Khanna and Baswana~\cite{khanna2010approximate} studied approximate shortest paths
for a single vertex failure.
As mentioned in Section~\ref{subsec:contributions}, Duan and Pettie~\cite{duan2010connectivity} studied the
decremental subgraph connectivity problem with sensitivity~$d$.
Chechik et al.~\cite{chechik2011sensitivity} considered distance oracles and routing schemes in
case of $d$ edge failures.

For the decremental subgraph connectivity problem with sensitivity $d$
there also exist conditional lower bounds
by Henzinger et al.~\cite{henzinger2015unifying} from the OMv conjecture
and most recently by Kopelowitz, Pettie and Porat~\cite{kopelowitz2016higher}
from the 3SUM conjecture.
The highest conditional lower bounds is the one
in~\cite{henzinger2015unifying}, which states that
under the OMv conjecture any algorithm using preprocessing time polynomial in $n$ and
and update time polynomial in $d$ must have a query time of $\Omega(d^{1-\varepsilon})$
for all $\varepsilon > 0$. Hence, the query time of
the decremental algorithm by Duan and Pettie~\cite{duan2010connectivity} is essentially optimal
with respect to the lower bound.

The general subgraph connectivity problem, which allows for an arbitrary number
of updates, has gained an increasing interest during the last years.
The problem was introduced by Frigoni and Italiano~\cite{frigoni2000dynamically},
who studied it for planar graphs; they achieved amortized polylogarithmic update and
query times.
In general graphs, Duan~\cite{duan2010new} constructed a data structure
which uses almost linear space, preprocessing
time $\tilde O(m^{6/5})$, worst-case update time $\tilde O(m^{4/5})$ and
worst-case query time $\tilde O(m^{1/5})$.
In an amortized setting, the data structure given by Chan, Patrascu and Roditty~\cite{chan2011dynamic}
has an update time $\tilde O(m^{2/3})$ and query time $\tilde O(m^{1/3})$; its space usage and
preprocessing time is $\tilde O(m^{4/3})$. This improved an earlier result by
Chan~\cite{chan2002dynamic} significantly. The data structure of~\cite{chan2011dynamic}
was later improved by Duan~\cite{duan2010new} to use only $\tilde O(m)$ space.
Baswana et al.~\cite{baswana2015dynamic} gave a deterministic worst-case
algorithm with update time $\tilde O(\sqrt{mn})$ and query time $O(1)$.
Further, conditional lower bounds were derived for the subgraph connectivity
problem from multiple conjectures~\cite{abboud2014popular,henzinger2015unifying}.
The highest such lower bound was given in~\cite{henzinger2015unifying};
it states that under the OMv conjecture,
the subgraph connectivity problem cannot be solved faster than
with update time $\Omega(m^{1-\delta})$ and query time $\Omega(m^\delta)$
for any $\delta \in (0,1)$ when we only allow polynomial preprocessing time
of the input graph.
Hence, the update and query times of the aforementioned algorithms are optimal up to
polylogarithmic factors and tradeoffs between update and query times.

Compared to the subgraph connectivity problem, it has a much longer tradition to
study the (edge) connectivity problem in which updates delete or add edges to the graph.
Henzinger and King~\cite{henzinger1999randomized} were the first to give an algorithm
with expected polylogarithmic update and query times;
the best algorithm using Las Vegas randomization is by Thorup~\cite{thorup2000near}
with an amortized update time of $O(\log n (\log \log n)^3)$.
Holm, de Lichtenberg and Thorup~\cite{holm2001poly} gave the first deterministic
algorithm with amortized polylogarithmic update times; currently the best such
algorithm is given by Wulff-Nilsen~\cite{wulff2013faster} which has an update time
of $O(\log^2 n / \log \log n)$.
Recently, Kapron, King and Mountjoy~\cite{kapron2013dynamic} were able to
provide the first data structure which has expected \emph{worst case} polylogarithmic time
per update and query. The result of~\cite{kapron2013dynamic}
was lately improved by Gibb et al.~\cite{gibb2015dynamic}
to have update time $O(\log^4 n)$.
However, the best deterministic worst case data structures
still have running times polynomial in the number of nodes of the graph.
For a long time the results by Frederickson~\cite{frederickson1983data}
and Eppstein et al.~\cite{eppstein1997sparsification}
running in time $O(\sqrt{n})$ were the best known. Only recently
this was slightly improved by Kejlberg-Rasmussen et al.~\cite{rasmussen2015faster},
who were able to obtain a worst case update time of
$O\left(\sqrt{\frac{n (\log \log n)^2}{\log n}}\right)$.

The rest of the paper is outlined as follows:
We start with notation and preliminaries in Section~\ref{Sec:prelim}.
In Section~\ref{Sec:IncAlgo} we prove the results for the incremental
algorithm which will already contain the main ideas for
the more complicated fully dynamic algorithm.
Section~\ref{Sec:FullyDynamicAlgo} provides the main result of this paper.

\section{Preliminaries}
\label{Sec:prelim}

In this section, we formally introduce the subgraph connectivity problem with
sensitivity $d$. At the end of the section, we show a
lemma that characterizes when disconnected vertices become connected
after activating additional vertices; the lemma will be essential to
prove the correctness of our algorithms.

The \emph{subgraph connectivity problem} with sensitivity $d$
is as follows:
Let $G = (V,E)$ be a graph with $n$ vertices and $m$ edges and
a partition of the vertices into sets \Von{} and \Voff.
The vertices in \Von{} are said to be \emph{turned on}
or \emph{activated} and those in $\Voff$ are said
to be \emph{turned off} or \emph{deactivated}.
We get a single batch update in which the states of up to $d$ vertices are be changed.
In a query for two vertices $u$ and $v$,
the algorithm has to return if there exists a path from $u$ to $v$ only
traversing activated vertices.

As we consider the subgraph connectivity problem in a sensitivity setting,
after processing a single update and a sequence of queries, we roll back
to the initial input graph. Hence, the data structure does not allow to alter
the graph by an arbitrary amount. This allows us to offer much faster update
and query times than the best algorithms which solve the general
fully dynamic problem.

We introduce more notation.
By $\Gon$ we denote the projection of $G$ on the vertices which are initially on,
i.e., $\Gon = G[\Von] = (\Von, \Eon)$, where $\Eon = \{ (u,v) \in E : u,v \in \Von \}$.
We set $\Eoff = E \setminus \Eon$ to the set of edges which have at least
one endpoint in \Voff.
To distinguish between the sizes of the activated and deactivated vertices and edges,
we set $\non = |\Von|$ to the number of activated vertices and $\noff = |\Voff|$ to the
number of deactivated vertices. Further, we set $\mon = |\Eon|$ to the number of edges
in \Gon{} and $\moff = |\Eoff|$.

With this notation we can quickly describe the main difficulties of the
subgraph connectivity problem: If \Gon{} is connected, then already
deactivating a single vertex of \Von{} can make it fall apart into
$\Theta(\non)$ connected components; on the other hand, in \Gon{} we can have
$\Theta(\non)$ connected components initially and activating a single vertex of \Voff{}
with $\Theta(\non)$ edges can make the resulting graph connected.
Hence, when deactivating or activating vertices, the number of connected
components can change arbitrarily much. However, the update and query
times of our algorithms are not supposed to polynomially depend on $n$,
but only on the size of the udpate $d$ which will usually be much smaller.

\subsection{Characterisation of Paths After Activating Vertices}
In this subsection, we introduce the terminology to characterize
when vertices in a graph $G$ become connected after we activated
the vertices of a set $I$.

We say that a deactivated vertex $v \in \Voff$ 
and a connected component $C$ of $G$ are \emph{adjacent},
if there exists a vertex $u \in C$ such that $(u,v) \in E$.
Two vertices $u,v \in \Voff$ are \emph{connected via a connected component},
if (1) there exists a connected component $C$ to which
both $u$ and $v$ are adjacent or (2) if $(u,v) \in E$.
In other words, $u$ and $v$ are connected via a connected component if they
can reach each other by a path that only traverses vertices from a single connected
component of $G$ or if $u$ and $v$ are connected by an edge.
Two connected components $C_1 \neq C_2$ are
\emph{connected by the set $I$} if there exists
a sequence of vertices $v_1, \dots, v_k \in I$ such that
(1) $v_1$ is adjacent to $C_u$,
(2) $v_k$ is adjacent to $C_v$ and
(3) $v_i$ and $v_{i+1}$ are connected via a connected component
for all $i = 1, \dots, k-1$.

We can characterize when two disconnected vertices become connected in $G$
after the vertices of the set $I$ are activated. This is done in the following lemma.

\begin{lemma}
\label{lem:paths}
  Let $G = (V,E)$ be a graph with \Von{} and \Voff{} as before.
  Further, let $I \subseteq \Voff{}$ be a set of vertices which is activated.
  Let $u, v$ be two disconnected vertices in $\Gon$ and let
  $C_u \neq C_v$ be their connected components.
  Then $u$ and $v$ are connected in $G$ after activating the vertices in $I$
  if and only if
  $C_u$ and $C_v$ are connected by the set $I$.
\end{lemma}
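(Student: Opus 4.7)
The plan is to prove both directions by a direct path-manipulation argument, using the fact that any simple path in the activated graph $G[\Von \cup I]$ alternates between maximal ``on'' subpaths (lying entirely inside a single connected component of $\Gon$) and vertices of $I$.

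\textbf{Forward direction ($\Rightarrow$).} Assume $u$ and $v$ are connected in $G[\Von \cup I]$, and fix any simple $u$--$v$ path $P = x_0, x_1, \dots, x_\ell$ in that graph. Since $u,v \in \Von$ and they lie in distinct components of $\Gon$, at least one interior vertex of $P$ must belong to $I$. Let $v_1, \dots, v_k$ (with $k \geq 1$) be the $I$-vertices of $P$ in the order they appear. The subpath from $u$ to $v_1$ uses only $\Von$-vertices before $v_1$, so its penultimate vertex lies in $C_u$ and is adjacent to $v_1$; hence $v_1$ is adjacent to $C_u$. Symmetrically $v_k$ is adjacent to $C_v$. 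For any consecutive pair $v_i, v_{i+1}$, the subpath between them either (a) is the single edge $(v_i, v_{i+1}) \in E$, or (b) passes through one or more $\Von$-vertices which, being connected by a $\Von$-path, all lie in a single component $C$ of $\Gon$ to which both $v_i$ and $v_{i+1}$ are adjacent. In both cases $v_i$ and $v_{i+1}$ are connected via a connected component. Therefore the sequence $v_1,\dots,v_k$ witnesses that $C_u$ and $C_v$ are connected by $I$.

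\textbf{Backward direction ($\Leftarrow$).} Assume a witness sequence $v_1, \dots, v_k \in I$ exists. I would construct an explicit $u$--$v$ walk in $G[\Von \cup I]$ by concatenation. Since $v_1$ is adjacent to $C_u$, pick a neighbor $w_1 \in C_u$ of $v_1$ and walk from $u$ to $w_1$ inside $C_u$, then take the edge to $v_1$. For each $i = 1, \dots, k-1$, if $(v_i, v_{i+1}) \in E$ simply traverse that edge; otherwise let $C$ be a component of $\Gon$ adjacent to both $v_i$ and $v_{i+1}$, and append the edge from $v_i$ to a neighbor in $C$, a path inside $C$, and the edge to $v_{i+1}$. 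Finally, since $v_k$ is adjacent to $C_v$, append an edge into $C_v$ and a path inside $C_v$ to $v$. Every vertex of the resulting walk is in $\Von \cup I$, so $u$ and $v$ are connected after activating $I$.

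\textbf{Where I expect friction.} The substantive step is the forward direction, specifically the claim that any maximal $\Von$-subpath between two consecutive $I$-vertices of $P$ lies in one component of $\Gon$; this is really just the definition of connected components, but the bookkeeping must cleanly handle the degenerate case of a length-one subpath (direct edge between two $I$-vertices), which is exactly why clause (2) of ``connected via a connected component'' is included. Once that case distinction is in place, both directions reduce to straightforward concatenation/decomposition of paths.
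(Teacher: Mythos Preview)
Your proposal is correct and follows essentially the same path-decomposition approach as the paper: extract the $I$-vertices along a $u$--$v$ path and observe that each maximal $\Von$-segment lies in a single component of $\Gon$. If anything, you are more careful than the paper, which dispatches the backward direction with a one-line ``symmetric'' remark and does not explicitly isolate the direct-edge case between consecutive $I$-vertices.
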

\begin{proof}
  Assume $u$ and $v$ are connected in $G$ after activating the vertices in $I$.
  Then there exists a path $u = w_0 \to w_1 \to \cdots \to w_\ell \to w_{\ell + 1} = v$ in $G$;
  let $w_{j_1}, \dots, w_{j_r}$ be the vertices of the path which are from the set $I$
  with $j_i < j_{i+1}$ for all $i=1,\dots,r$.
  Now observe that for all $i$, the vertices $w_{j_i + 1}, \dots, w_{j_{i+1} - 1}$
  must be in the same connected
  component $C_{j_i}$. Clearly, $w_{j_i}$ and $w_{j_{i+1}}$ are adjacent to $C_{j_i}$ and they are
  connected by the connected component $C_{j_i}$.
  The same arguments can be used to show that $C_u$ and $w_{j_1}$ are adjacent and to show that
  $C_v$ and $w_{j_r}$ are adjacent. This implies that $C_u$ and $C_v$ are connected by the set $I$.

  The other direction of the proof is symmetric.
\end{proof}

We will use Lemma~\ref{lem:paths} to argue about the correctness of our algorithms.
In particular, when we prove the correctness of our algorithms we show
that the connected components of the query vertices become connected by the set $I$
of newly activated vertices.
This is useful as we can preprocess which vertices of \Voff{} are
connected via connected components and which deactivated vertices are
reachable from the connected components of $G$.
With these properties, we are able to avoid having to keep track
of all connected components of $G$ after an update.

\section{Incremental Algorithm}
\label{Sec:IncAlgo}

In this section, we describe an algorithm for the incremental subgraph connectivity problem
that has preprocessing time $O(\noff^2 \cdot \non + m)$, update time $O(d^2)$, query time
$O(d)$ and uses space $O(\noff \cdot n)$. This will prove Theorem~\ref{Thm:IncAlgo} stated
in the introduction.

The main idea of the algorithm is to exploit Lemma~\ref{lem:paths} by preprocessing
which deactivated vertices are connected by connected components of \Gon{} and
preprocessing the adjacency of deactivated vertices and connected components of \Gon{}.

\subsection{Preprocessing}
We first compute the connected components $C_1, \dots, C_k$ of $\Gon$ and
label each vertex in \Von{} with its connected component.
For each connected component $C_i$, we use a binary array $A_{C_i}$ of size $\noff$
to store which vertices in $\Voff$ are adjacent to $C_i$.
We further equip each vertex $u \in \Voff$ with a binary array $A_u$ of size
$k + \noff = O(n)$:
In the first $k$ entries of $A_u$, we store to which connected components
$u$ is connected;
in the final $\noff$ components of $A_u$, we store to
which $v \in \Voff$ the vertex $u$ is connected by a connected component.

When the algorithm performs updates, it uses the arrays $A_u$ to
determine in constant time if $u$ is connected to other
deactivated vertices from \Voff{} via connected components.
This avoids having to check all connected components $C_i$ of the vertex $u$ which
could take time $\Theta(\non)$.

The preprocessing takes time $O(\mon)$ to compute the connected components $C_i$
and labeling the vertices in \Von.
Using one pass over all edges we can compute the arrays containing
the connectivity information between the $C_i$ and \Voff, i.e., we can fill the arrays
$A_{C_i}$ and the first $k$ components of the $A_u$. This takes $O(m)$ time.

Notice that if $u$ is adjacent to $C_i$, then
$A_u$ must have a 1 wherever $A_{C_i}$ has a 1.
Then to finish building the arrays $A_u$,
we can compute the last $\noff$ entries of $A_u$ as the bitwise OR of the arrays $A_{C_i}$
for all $C_i$ which $u$ is adjacent to. This can be done in time
$O(k \cdot \noff) = O(\non \cdot \noff)$ for a single vertex $u$.
Since we have \noff{} vertices in \Voff, computing all $A_u$
takes time $O(\noff^2 \cdot \non)$. The computation of the $A_u$ therefore dominates the
running time of the preprocessing.

The space we require during the preprocessing is $O(\noff)$ for each
connected component of \Gon{} and $O(n)$ for each vertex in \Voff. Hence, in total
we require $O(\noff \cdot n)$ space and preprocessing time $O(\noff^2 \cdot \non + m)$.

\subsection{Updates}
During an update which activates $d$ vertices from a set $I$,
we build the \emph{\supergraph}~$S$ with the vertices of $I$ as its nodes.
We add an edge between
a pair of vertices $u, v \in I$ if they are connected by a connected component $C$ of \Gon{}.
Notice that the \supergraph{} encodes the connectivity of the vertices
in $I$ via the connected components of \Gon{}.

Computationally, this can be done in time $O(d^2)$:
For each pair of vertices $u, v \in I$, we check in $A_u$ if $u$ is
connected to $v$ via a connected component in time $O(1)$.
As we have to consider $O(d^2)$ pairs of vertices, the total
time to construct the \supergraph{} is $O(d^2)$.

Finally, we compute the connected components $S_1, \dots, S_\ell$
of $S$ and label each vertex in $S$ with its connected component.
This can be done in time $O(|S|) = O(d^2)$.
Hence, the total update time is $O(d^2)$.

\subsection{Queries}
Consider a query if two activated vertices $u$ and $v$
are connected.

We find the connected components $C_u$ and $C_v$ of $u$ and $v$, respectively.
If $C_u = C_v$, then we return that $u$ and $v$ are connected and we are done.

Otherwise, let $S_i$ be a connected component of $S$.
We consider each vertex $w$ of $S_i$ and check if it is connected to
$C_u$ or $C_v$ using $A_{C_u}$ and $A_{C_v}$. After considering all vertices of $S_i$,
we check if both $C_u$ and $C_v$ are connected to $S_i$. If this is the case,
we return that $u$ and $v$ are connected, otherwise, we proceed to the next
connected component of $S$.

During the query we considered each vertex in $S$ exactly once and
spent time $O(1)$ processing it. Hence, the total query time is $O(d)$.

It is left to prove the correctness of the result of the queries.
This is done in the following lemma.

\begin{lemma}
  Consider an update which activates the vertices from a set $I \subseteq \Voff$.
  Then a query if two vertices $u$ and $v$ are connected
  in $G$ after the update delivers the correct result.
\end{lemma}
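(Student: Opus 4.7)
The plan is to reduce the correctness statement to Lemma~\ref{lem:paths} and then argue that the increment graph $S$ together with the arrays $A_{C_u}$, $A_{C_v}$ faithfully encode the condition in that lemma. First, I would dispatch the easy case: if the query algorithm finds $C_u = C_v$, then a path from $u$ to $v$ already exists in $\Gon \subseteq G$ (and a fortiori after the activation), so returning ``connected'' is correct. Conversely, if $C_u = C_v$ then the algorithm indeed returns ``connected'' by its very first check, so in this case correctness is immediate.

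Next I would treat the main case $C_u \neq C_v$. By Lemma~\ref{lem:paths}, $u$ and $v$ are connected in $G$ after activating $I$ if and only if $C_u$ and $C_v$ are connected by $I$, which unfolds to: there exists a sequence $v_1, \dots, v_k \in I$ with $v_1$ adjacent to $C_u$, $v_k$ adjacent to $C_v$, and consecutive pairs $v_i, v_{i+1}$ connected via a connected component of $\Gon$. The key observation is that by the construction of the increment graph $S$, the edges of $S$ are precisely the pairs of vertices in $I$ which are connected via a connected component of $\Gon$. Hence such a sequence $v_1, \dots, v_k$ exists if and only if there is a walk in $S$ from some vertex adjacent to $C_u$ to some vertex adjacent to $C_v$, which in turn is equivalent to the existence of a connected component $S_i$ of $S$ containing at least one vertex adjacent to $C_u$ and at least one vertex adjacent to $C_v$.

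It then remains to verify that the query algorithm detects exactly this condition. For each component $S_i$ the algorithm iterates over every $w \in S_i$ and consults $A_{C_u}[w]$ and $A_{C_v}[w]$; since $A_{C}$ stores precisely the vertices of $\Voff$ adjacent to the connected component $C$, after scanning $S_i$ the algorithm knows whether $S_i$ contains any vertex adjacent to $C_u$ and, separately, any vertex adjacent to $C_v$. It declares $u,v$ connected exactly when both flags are set for some $S_i$. Combined with the equivalence derived above, this matches the condition of Lemma~\ref{lem:paths} and therefore yields the correct answer.

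The main obstacle, and hence the step I would be most careful with, is the bidirectional translation between sequences witnessing ``$C_u$ and $C_v$ connected by $I$'' and paths in $S$ whose endpoints are adjacent to $C_u$ and $C_v$ respectively. In particular, I would explicitly check both directions: from a sequence in the definition to a walk in $S$ (using that consecutive pairs give edges of $S$), and from a connected component $S_i$ witnessing the algorithm's positive answer back to a valid sequence (taking a path in $S_i$ between a $C_u$-adjacent and a $C_v$-adjacent vertex). Once this correspondence is stated, the proof is essentially a direct invocation of Lemma~\ref{lem:paths}.
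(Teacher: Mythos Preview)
Your proposal is correct and follows essentially the same approach as the paper: dispatch the case $C_u = C_v$, then reduce the case $C_u \neq C_v$ to Lemma~\ref{lem:paths} by arguing that the edges of the \supergraph{} $S$ encode exactly the relation ``connected via a connected component'' on $I$, and that the arrays $A_{C_u}$, $A_{C_v}$ let the query procedure detect whether some connected component $S_i$ of $S$ touches both $C_u$ and $C_v$. Your write-up is in fact a bit more explicit than the paper's about the bidirectional correspondence between witnessing sequences and paths in $S$, which is a good instinct.
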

\begin{proof}
If in the query procedure we encountered that
$C_u = C_v$, then the result of the algorithm is clearly correct.

If $C_u \neq C_v$, then observe that the algorithm returns true if and only
if $C_u$ and $C_v$ are connected by the set $I$:
Let $S_i$ be the connected component of $S$ for which the query returns true.
Then there must exist vertices $w_1, \dots, w_t$ in the \supergraph{} such that
(1) $w_1$ is adjacent to $C_u$, (2) $w_t$ is adjacent to $C_v$
and (3) $(w_i, w_{i+1})$ is an edge in $S$ for all $i = 1,\dots,t-1$.
The first two claims are true because the query procedure checks this
in the arrays $A_{C_u}$ and $A_{C_v}$.
By construction of the \supergraph{}, the \supergraph{} has an edge $(w_i, w_{i+1})$
if and only if those vertices are connected by a connected component (this follows from
what we preprocessed in the arrays $A_{w_i}$).
This implies that a query returns true iff $C_u$ and $C_v$ are connected by the set $I$.

By Lemma~\ref{lem:paths} the algorithm returns the correct answer.
\end{proof}

\section{Fully Dynamic Algorithm}
\label{Sec:FullyDynamicAlgo}

In this section, we present the main result of the paper.
We provide a data structure for the fully dynamic subgraph connectivity problem
with sensitivity $d$, i.e., we process a batch update which changes the states
of at most $d$ vertices.
Our algorithm uses a data structure for the decremental problem as a subprocedure.
Assume the decremental algorithm uses space~$S$, preprocessing time~$t_p$,
update time~$t_u$ and query time~$t_q$.
Then the fully dynamic algorithm uses space~$O(\noff^2 \cdot S)$,
preprocessing time~$O(\noff^2 \cdot t_p)$, update time~$O(d^2 \cdot \max\{t_u,t_q\})$
and query time~$O(d \cdot t_q)$.

We reuse the \supergraph s which we used in the incremental algorithm.
For the construction of the \supergraph s
we replace the vectors $A_u$ and $A_{C_i}$ of the previous section by
slightly augmented versions of \Gon{} which are equipped with a
decremental subgraph connectivity data structure, e.g., the
one of Lemma~\ref{Lem:DecAlgo} by Duan and Pettie~\cite{duan2010connectivity}.
The purpose of the augmented graphs is to check if a pair of initially deactivated vertices 
is connected via a connected component after deactivating some vertices of \Von.

We sketch the main steps of our algorithm.
In the preprocessing we build an augmented graph for each pair of vertices of \Voff;
each augmented graph is equipped with a decremental subgraph connectivity data structure.
In an update, we first process the vertex deactivations in the augmented graphs.
Then we build the \supergraph{} of vertices that were activated.
Queries are handled similarly to the incremental algorithm
by using the \supergraph{}, but we have to check if the vertices
of the \supergraph{} can still reach the query vertices
(this connectivity may have been destroyed by the vertex deactivations).

\subsection{Preprocessing}
For each pair of nodes $u,v \in \Voff$, we build the augmented graph
$G_{u,v} = G[\Von \cup \{u,v\}]$, i.e., $G_{u,v}$ consists of \Gon{} after
adding $u$ and $v$. Observe that $u$ and $v$ cannot introduce more than
$O(\non)$ edges and hence $G_{u,v}$ still has $O(\non)$ vertices and
$O(\mon)$ edges.
We equip $G_{u,v}$ with a decremental subgraph connectivity
data structure with sensitivity $d$.
Later, we use the graph $G_{u,v}$ to check if $u$ and $v$ are
connected via a connected component after deleting vertices from \Gon;
intuitively, the graphs $G_{u,v}$ replace the
vectors $A_u$ and $A_v$ of the incremental algorithm.
We need space $O(\noff^2 \cdot S)$ to store the $G_{u,v}$ where
$S$ is the space to store \Gon{} with the decremental data structure.

For each $u \in \Voff$, we build the graph $G_u = G[\Von \cup \{u\}]$
and equip it with the decremental data structure;
we further equip \Gon{} with the decremental data structure.
We use the graphs $G_u$ to replace the arrays $A_{C_i}$ of the incremental
algorithm; we cannot use the arrays anymore because the connected components
of \Gon{} can fall apart due to vertex deactivations.
The space we need to store the graphs $G_u$ and $G$ is $O(\noff \cdot S)$.

In total, the preprocessing takes space
$O(\noff^2 \cdot S)$ and time $O(\noff^2 \cdot t_p)$.

\subsection{Updates}
Assume that we get an update $U$ which deactivates the vertices of a set $D \subseteq \Von$
and activates the vertices of a set $I \subseteq \Voff$ with $|D| + |I| \leq d$.
Our update procedure has two steps:
We first remove the vertices in $D$ from $G_{u,v}$ for all newly activated
vertices $u,v \in I$.
After that we build the \supergraph{} consisting of the vertices
of $I$ as we did in the incremental algorithm.

We describe the sketched steps of the update procedure in more detail.
Firstly, we process the deletions of the set $D$.
For each pair $u,v \in I$, we delete the vertices of $D$ in $G_{u,v}$ in time $t_u$.
Since we have $O(d^2)$ pairs of vertices of $I$ to consider,
this takes time $O(d^2 \cdot t_u)$.

We update $\Gon$ and all $G_u$ by deleting the vertices the vertices from $D$.
This does not take longer than updating the graphs $G_{u,v}$.

Secondly, we build the \supergraph{} consisting of the vertices in $I$.
For each pair of vertices $u,v \in I$, we add an edge $e = (u,v)$ to the \supergraph{}
if a query in $G_{u,v}$ returns that $u$ and $v$ are connected.
Such a query takes time $t_q$.
The time we spend to build the \supergraph{} is $O(d^2 \cdot t_u)$.
Finally, we compute the connected components of the \supergraph{} in time $O(d^2)$.

Altogether, the total update time of the update procedure is
$O(d^2 \cdot \max\{t_u,t_q\})$.

\subsection{Queries}
We handle the query if two vertices $u$ and $v$ of $G$ are connected
similarly as in the incremental algorithm by using the \supergraph.

Before we use the \supergraph{}, we query if $u$ and $v$
are connected in the instance of \Gon{} in which we deactivated the
vertices of the set $D$. If the query returns true, then $u$ and
$v$ are connected, otherwise, we proceed by using the \supergraph.

For each connected component $B$ of the \supergraph, we consider
each vertex $w \in B$ and we query in $G_w$ if $w$ is connected to $u$ or $v$.
If $B$ had vertices $w, w'$ which are connected to $u$ and $v$, respectively,
then we return that $u$ and $v$ are connected.
Otherwise, we proceed to the next connected component of the \supergraph.

The total query time of our algorithm is $O(d \cdot t_q)$ as in the worst case
we have to perform a query in $G_w$ for each of the $O(d)$ vertices $w \in I$.

Notice that due to the vertex deactivations we cannot precompute
the connected components $C_i$ of \Gon{} and their connectivity with vertices in \Voff{}
as we did in the incremental algorithm:
Each $C_i$ may consist of $\Theta(\non)$ vertices and might as well fall apart
into $\Theta(n)$ connected components after the vertex deactivations. Hence, in the
update procedure we cannot keep the information about the connectivity of the vertices
$C_i$ and the added vertices up to date, as this may take time $\Theta(n)$.
For our construction this also rules out obtaining a better query time.

We conclude the section by proving that the query returns the correct results
in the following lemma.

\begin{lemma}
  Consider an update $U$ deactivating the vertices from a set $D$ and activating
  the ones from a set $I$. Then a query if two vertices $u$ and $v$ are connected
  in $G$ after the update delivers the correct result.
\end{lemma}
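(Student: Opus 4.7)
The plan is to reduce the statement to Lemma~\ref{lem:paths} applied to the graph $G'$ obtained from $G$ by reclassifying the vertices of $D$ as initially deactivated; that is, $G'$ has on-set $\Von' = \Von \setminus D$ and off-set $\Voff' = \Voff \cup D$. By construction, the state of $G$ after the update $U$ coincides with the state of $G'$ after activating the set $I \subseteq \Voff \subseteq \Voff'$. Hence the query for $u,v$ is answered correctly iff the algorithm decides connectivity of $u,v$ in $G'$ after activating $I$. Note also that $G'_{\text{on}}$ equals $\Gon$ with the vertices of $D$ deleted, which is precisely the graph maintained by the decremental structures after the update procedure processes $D$.

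First I would handle the easy case: if $u$ and $v$ lie in the same connected component of $G'_{\text{on}}$, then the direct decremental query on the updated copy of $\Gon$ correctly returns true, and Lemma~\ref{lem:paths} is not needed. Otherwise $u$ and $v$ lie in distinct components $C_u \neq C_v$ of $G'_{\text{on}}$, and Lemma~\ref{lem:paths} says that the true answer is ``yes'' iff $C_u$ and $C_v$ are connected by $I$. So it remains to prove that the increment-graph phase of the query procedure returns true iff $C_u$ and $C_v$ are connected by $I$ in $G'$.

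For this I would establish two equivalences that are the analogues of the bitarray lookups used in the incremental proof, but now phrased in terms of decremental queries on the augmented graphs after removing $D$. First, for $w,w' \in I$ the increment graph contains the edge $(w,w')$ iff $w$ and $w'$ are connected via a connected component of $G'_{\text{on}}$: one direction is immediate from the definition of ``connected via a connected component'', and the converse uses that any $w$--$w'$ path in the updated $G_{w,w'}$ either is the single edge $(w,w')$ or passes through internal vertices of $G'_{\text{on}}$, which must all lie in a common component $C$ to which both $w$ and $w'$ are adjacent. Second, for $w \in I$ and a query endpoint $x \in \{u,v\}$, the updated $G_w$ reports that $w$ and $x$ are connected iff $w$ is adjacent to the component of $x$ in $G'_{\text{on}}$, by essentially the same argument restricted to the single added vertex $w$.

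Having these two equivalences, the increment-graph phase exactly traces a sequence $w_1, \dots, w_k$ of vertices of $I$ satisfying the three conditions in the definition of ``$C_u$ and $C_v$ are connected by $I$''. So the algorithm returns true iff such a sequence exists, which by Lemma~\ref{lem:paths} is the correct answer. The main obstacle is verifying the first equivalence cleanly, in particular ruling out that a path between $w$ and $w'$ in the updated $G_{w,w'}$ could transit different connected components of $G'_{\text{on}}$: this is handled by noting that $w$ and $w'$ are the only non-$\Von'$ vertices present in $G_{w,w'}$, so any internal portion of the path stays within a single component of $G'_{\text{on}}$.
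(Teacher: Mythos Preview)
Your proposal is correct and follows essentially the same approach as the paper's own proof: reduce to Lemma~\ref{lem:paths} on the graph $H=\Gon\setminus D$ (your $G'_{\text{on}}$), handle the trivial same-component case via the decremental query on $\Gon$, and then argue that the increment-graph phase returns true iff $C_u$ and $C_v$ are connected by $I$. Your explicit justification of the two equivalences (edge in the increment graph $\Leftrightarrow$ connected via a component of $H$; query in $G_w$ $\Leftrightarrow$ adjacency to $C_x$), and in particular your remark that a simple $w$--$w'$ path in the updated $G_{w,w'}$ has all internal vertices in a single component of $H$, supplies detail that the paper leaves implicit, but the overall argument is the same.
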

\begin{proof}
In the query procedure, we first check if $u$ and $v$ are connected
in $\Gon$ after deleting the vertices from $D$.
Clearly, if the algorithm returns true,
then $u$ and $v$ are connected.

We move on to argue about the correctness in the case that $u$ and $v$
are not connected in the graph $H = \Gon \setminus D$.
Let $C_1, \dots, C_k$ be the connected components of $H$
(not those of \Gon) and let $C_u$ and $C_v$ be the
connected components of $u$ and $v$.
We show that a query returns that $u$ and $v$ are connected
if and only if $C_u$ and $C_v$ are connected by the set $I$.
Then Lemma~\ref{lem:paths} implies the correctness of the algorithm.

Observe that a query returns that $u$ and $v$ are connected if and only if
there exists a connected component $B$ in the \supergraph{}
which contains vertices $w_1, \dots, w_\ell \in B \subseteq I$, such that
(1) $w_1$ is connected to $u$,
(2) $w_\ell$ is connected to $v$ and
(3) there is an edge between $w_i$ and $w_{i+1}$ in the \supergraph{}
    for all $i = 1,\dots,\ell-1$:
We obtain the first two properties from the queries in $G_{w_1}$
and $G_{w_\ell}$; 
the third property is true due to the queries in the augmented graphs $G_{w_i, w_{i+1}}$
and implies that the $w_i$ are connected via connected components.

Hence, we conclude that a query returns that $u$ and $v$ are connected
if and only if $C_u$ and $C_v$ are connected by the set $I$.
Lemma~\ref{lem:paths} implies that the algorithm is correct.
\end{proof}

{\bf Acknowledgements.}
We would like to thank the reviewers for their helpful comments.
The research leading to these results has received funding from the European Research Council under the
European Union's Seventh Framework Programme (FP/2007-2013) / ERC Grant Agreement
no.\ 340506.

\bibliography{p135-henzinger}{}
\bibliographystyle{plainurl}

\end{document}